\newcommand{\VC}[1]{}
\newcommand{\couic}[1]{}
\newcommand{\ports}{\pi}
\newtheorem{theorem}{Theorem}[section]
\newtheorem{lemma}[theorem]{Lemma}
\newenvironment{proof}[1][Proof]{\begin{trivlist}
\item[\hskip \labelsep {\bfseries #1}]}{\end{trivlist}}
\newenvironment{definition}[1][Definition]{\begin{trivlist}
\item[\hskip \labelsep {\bfseries #1}]}{\end{trivlist}}
\newcommand{\qed}{\nobreak \ifvmode \relax \else
      \ifdim\lastskip<1.5em \hskip-\lastskip
      \hskip1.5em plus0em minus0.5em \fi \nobreak
      \vrule height0.75em width0.5em depth0.25em\fi}
\title{Intrinsic Universality of Causal Graph Dynamics}
\author{Simon Martiel
\institute{Universit\'e Nice Sophia Antipolis, I3S, CNRS UMR 7271, 06903 Sophia Antipolis, BP121, France}
\thanks{This work was supported by the John Templeton Foundation, grant ID 15619}
\thanks{This work was supported by the French National Research Agency, project EMC (ANR\_09\_BLAN\_0164)}
\email{martiel@i3s.unice.fr}
\and Bruno Martin
\institute{Universit\'e Nice Sophia Antipolis, I3S, CNRS UMR 7271, 06903 Sophia Antipolis, BP121, France}
\thanks{This work was supported by the French National Research Agency, project EMC (ANR\_09\_BLAN\_0164)}
\email{bruno.martin@unice.fr}
}
\begin{document}
\maketitle

\begin{abstract}
Causal graph dynamics are transformations over graphs that capture two important symmetries of physics, namely causality and homogeneity. They can be equivalently defined as continuous and translation invariant transformations or functions induced by a local rule applied simultaneously on every vertex of the graph. Intrinsic universality is the ability of an instance of a model to simulate every other instance of the model while preserving the structure of the computation at every step of the simulation. In this work we present the construction of a family of intrinsically universal instances of causal graphs dynamics, each instance being able to simulate a subset of instances.\\

{\bf Keywords.} \emph{Parallel graph transformations, Causal Graph Dynamics, Generalized Cayley Graphs, Intrinsic Universality, Intrinsic Simulation, Universality.}

\end{abstract}

\section{Introduction}

\noindent \emph{Causal Graph Dynamics} have been introduced in \cite{ArrighiCGD,ArrighiCayley} as a generalization of cellular automata on arbitrary graphs. 
Basically, causal graph dynamics are graphs transformations that capture the notion of causality (bounded speed of information) and homogeneity (the rules are translation invariant). It has been proved in \cite{ArrighiCGD} that this set of transformations over graphs corresponds exactly to functions induced by a local rule applied simultaneously on every vertex of a graph, the local rule being able to change the labels on the vertices and also the topology of the graph. Like cellular automata, they can be defined in two different ways. They can be seen as continuous and translation invariant functions over graphs for a Gromov-Hausdorff metric or as functions induced by a local rule applied synchronously on every vertex of a graph. While the first definition is very axiomatic and very close to the physical notions of causality and homogeneity, the second definition provides a constructive characterization of the model. As these two definitions of continuous and translation invariant transformations and local rule induced transformations are equivalent, we will focus here on the most constructive one.

\noindent \emph{Intrinsic universality} is the property of having one instance of the model of computation able to simulate all other instances while preserving the structure of the computation. This notion of preserving the structure of the computation has a precise meaning when studying models where a notion of space exists and has already been intensively studied for the cases of cellular automata \cite{OllingerCSP,Durand-LoseIntrinsic1D,cracow} or quantum cellular automata~\cite{ArrighiFI,ArrighiPQCA}. 

Intrinsic universality is a more constrained property than universality. Indeed we could just exhibit some local rule simulating a universal Turing machine or a universal cellular automaton (as in \cite{ArrighiCayley}) and conclude that this local rule is universal as it can simulate any instance of our model. In the case of causal graph dynamics, the computation is distributed on every vertex of a graph. The formalism provided in the previous work on the model seems powerful enough to define a notion of intrinsic simulation and thus the question of intrinsic universality arises naturally. In this work we present the construction of a family of intrinsically universal instances of causal graphs dynamics, each instance being able to simulate a subset of instances.
The paper is organized as follows. In section 2, graphs and causal graphs dynamics are formally introduced together with a formal definition of universality and intrinsic simulation. In section~3, two methods are described to encode the initial graph and the local rule to be simulated. In section 4, a universal construction machine is presented together with a family of intrinsically universal instances of the model. In a last section, conclusion and further works are given.

\section{Definitions and notations}

Although causal graph dynamics can be defined in a very abstract and axiomatic way (see \cite{ArrighiFI}), we provide here a constructive description. This constructive definition is given by the formalism of functions over generalized Cayley graph induced by a local rule (introduced in \cite{ArrighiCayley}).\\

\noindent Notations:
\begin{itemize}
\item $V$ is an infinite, (possibly uncountable) set, containing all possible vertex names.
\item $\pi$ is always a finite set of the form $\{1,...,d\}$ and is called the set of ports. 
\item $\Pi=\pi\times\pi$ is an alphabet and we denote by ``$.$'' the operation of concatenation of words over this alphabet.
\item $u:i$ stands for port $i$ of vertex $u$.
\item $\Sigma$ denotes a finite set of labels.
\item $S$ denotes a finite set of the form $\{\varepsilon , 1, 2,...,s\}$ where $\varepsilon$ denotes the empty word.
\end{itemize}

\subsection{Generalized Cayley graphs and localizable functions}

The first definition introduces the notion of \emph{labeled graphs}. Our graphs have a bounded degree. Every vertex has a set of ports on which edges are connected. Every port can receive at most one edge, edges are undirected. Moreover, every graph is assumed to be connected. 
\begin{definition}[Graph]\label{def:graphs}
A labeled {\em graph} $G$ is given by 
\begin{itemize}
\item[$\bullet$] A (at most countable) subset $V(G)$ of $V$, whose elements are called {\em vertices}.
\item[$\bullet$] A finite set $\pi=\{1,...,d\}$, whose elements are called {\em ports}.
\item[$\bullet$] A set $E(G)$ of two element subsets of $V(G):\pi$, such that for all vertex $u\in V(G)$ and port $i\in \pi$, $u:i$ appears at most once in $E(G)$. Elements of $E(G)$ are called edges.
\item[$\bullet$] A function $\sigma_G: V(G) \rightarrow \Sigma$ associating to each vertex $v$ some label $\sigma_G(v)$.
\end{itemize}
The set of labeled graphs with set of ports $\pi$ and with labels in $\Sigma$ is denoted $\mathcal{G}_{\pi,\Sigma}$.
\end{definition}

The definition of generalized Cayley graphs was introduced in \cite{ArrighiCayley}. They are graphs described relatively to a pointed vertex and can be seen as Cayley graphs where the internal operation associated to each generator is not defined on every vertex .

First we present Generalized Cayley graphs informally. Generalized Cayley graphs are a way to describe a pointed graph by associating to each vertex the set of all possible paths starting from the pointed vertex and leading to the vertex:
\begin{itemize}
\item[$\bullet$] $L$ is the set of all paths of the graph. Paths are of the form  $u.ab$ where $a,b$ are ports and $u$ is a path. The empty path is denoted $\varepsilon$. $u.ab$ means that after following the path $u$, we can exit the current vertex using port $a$ and enter next vertex using port $b$ and thus extend the path. Figure \ref{fig:paths} provides an illustration of the notion of path.
\begin{figure}
\begin{center}
\includegraphics[scale=1.2]{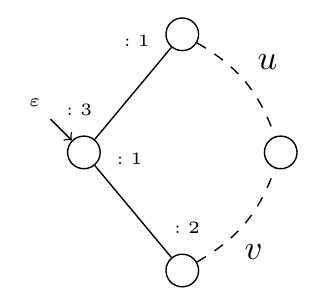}
\end{center}
\caption{In this generalized Cayley graph, the rightmost vertex will be in the equivalence class of the paths $31.u$ and $12.v$ where $u$ and $v$ are paths. The equivalence class of $\varepsilon$ also contains the paths $31.13$, $(31.13)^2$, \ldots \label{fig:paths} }
\end{figure}

\item[$\bullet$] Vertices are the equivalence classes of $\equiv_L$. Two paths are equivalent if they lead to the same vertex.
\item[$\bullet$] The graph is naturally pointed by the equivalence class of the empty path $\varepsilon$ (which necessarily belongs to the graph due to the first property). Thus we can talk about the pointed vertex of a generalized Cayley graph.

\end{itemize}
We can give a more intuitive meaning for the required properties:
\begin{itemize}
\item[$\bullet$] If the composed path $u.v$ is in the graph, then the path $u$ is in the graph.
\item[$\bullet$] If $u$ and $u'$ lead to the same vertex and $u.v$ is in the graph, then $u'.v$ is in the graph and leads to the same vertex as $u.v$.
\item[$\bullet$] If $u.(ab)$ leads to some vertex, then it possible to go back through the last edge to reach the vertex pointed by $u$ with $u.(ab).(ba)$ (every path has an ``inverse'' path, hence the analogy with the generators of a Cayley graph).
\item[$\bullet$] A vertex cannot be connected to two different vertices using the same port.
\end{itemize}

Now we can give the formal definition.

\begin{definition}[Generalized Cayley graph]
Let $L\subseteq \Pi^*$ be a language and $\equiv_L$ be an equivalence relation on this language. The tuple $(L,\equiv_L)$ is a generalized Cayley graph if it satisfies:
\begin{enumerate}
\item $\forall u,v\in \Pi^*\quad u.v \in L \Rightarrow u\in L$
\item $\forall u,u'\in L,\forall v\in\Pi^*\quad (u\equiv_L u'\,\wedge\, u.v\in L) \Rightarrow (u'.v\in L \,\wedge\, u'.v\equiv_L u.v)$
\item $\forall u\in L,\forall a,b\in \ports \quad u.ab\in L \Rightarrow (u.ab.ba\in L \,\wedge\, u.ab.ba\equiv_L u)$
\item $\forall u,u'\in L,\forall a,b,c \in \ports \quad (u\equiv_L u'\wedge u.ab\in L \wedge u'.ac\in L) \Rightarrow b=c.$
\end{enumerate}
This definition naturally extends to labeled generalized Cayley graphs. The set of generalized Cayley graphs with set of ports $\pi$ and labels $\Sigma$ is written $\mathcal{X}_{\pi,\Sigma}$.
\end{definition}
In the following section, we denote by $\sim G$ the generalized Cayley graph built from a graph $G$ with paths starting from the vertex with name $\varepsilon$. The operation $\sim$ is only defined on graph having a vertex named $\varepsilon$.

Notice that when considering a generalized Cayley graph, we will commonly assimilate $\tilde{u}$ (the equivalence class of $u$) and $u$ and thus we might talk about the vertex $u\in X$ for $X\in\mathcal{X}_{\pi,\Sigma}$ even though $u$ is a path.

In addition to the above properties, some operations are available for generalized Cayley graphs:
\begin{itemize}
\item[$\bullet$] The shift operation consists in moving the pointer of the graph along a given path. The graph $X$ shifted along the path $u\in L$ is denoted $X_u$.
\item[$\bullet$] The (rank $r$) neighboring operation consists in preserving only the disk of radius $r$ centered on the pointer. As generalized Cayley graphs are pointed we can define the disk of radius $r$ of a graph. The disk of radius $r$ of $X$ is denoted $X^r$. The set of disks of radius $r$, of ports $\pi$ and of labels in $\Sigma$ is written $\mathcal{X}^r_{\pi,\Sigma}$.
\end{itemize}

Moreover, we need a prefixing operation acting on graphs of the set $\mathcal{G}_{\pi,\Sigma}$. In the following definitions, $u.G$ with $u\in \Pi^*$ and $G$ a graph, stands for the graph $G$ where names of vertices are prefixed with $u$.

The next definition is a requirement for defining the union of two graphs after applying the local rule. Notice that it is a definition on graphs in $\mathcal{G}_{\pi,\Sigma}$.

\begin{definition} [Consistency]
Consider two labeled graphs $G$ and $H$ in ${\cal G}_{\pi,\Sigma}$, they are {\em consistent} if and only if for all vertices $u,v,w$ and ports $k,l,p$ we have:
$$ \{u:k,v:l\} \in E(G) \wedge  \{u:k,w:p\}\in E(H) \Rightarrow v=w \wedge l=p$$

\noindent and for all vertices $u\in V(G)\cap V(H)$ we have that $\sigma_G(u)=\sigma_H(u)$.
Two graphs are trivially consistent if  $V(G)\cap V(H)$ is empty.
\end{definition}

Less formally: Two graphs are consistent if they agree on their intersection. They must agree on the label of the vertices and on the used ports. \medskip

We next define the notion of \emph{local rule}.
In a graph generated by a local rule, names of vertices have a particular meaning. When applied on a disk $X^r_u\in\mathcal{X}^r_{\pi,\Sigma}$, the local rule $f$ produces a graph $f(X^r_u)$ such that the names of its vertices are sets of elements of the form $u.z$ with $u$ a path of $X^r_u$ and $z$ a suffix in $S$.
The conventions taken are such that integer $z$ stands for the `successor number $z$'. Hence the vertices designated by $\varepsilon,1,2\ldots$ are successors of the vertex $\varepsilon$, whereas those designated by $u.1,u.2\ldots$ are successors of its neighbor $u\in X^r$. For instance a vertex named $\{1,ab.2\}$ is understood to be both the first successor of vertex $\varepsilon$ and the second successor of the vertex attained by the path $ab$. Such a vertex can be designated by $1, ab.2$ or $\{1,ab.2\}$.
\begin{definition}[Local rule]
A (possibly partial) function $f$ from ${\cal X}^r_{\pi,\Sigma}$ to ${\cal G}_{\pi,\Sigma}$ is a {\em local rule} if and only if :
\begin{itemize}
\item[$\bullet$] For all $X$, the vertices of $f(X)$ are disjoint subsets of $V(X).S$ and $\varepsilon\in f(X)$,
\item[$\bullet$] There exists a bound $b$ such that for all disks $X^{r+1}$, $|V(f(X^{r+1}))|\leq b$,
\item[$\bullet$] For any disk $X^{r+1}$ and any $u\in X^{0}$ we have that $f(X^r)$ and $u.f(X_u^r)$ are non-trivially consistent,
\item[$\bullet$] For any disk $X^{3r+2}$ and any $u\in X^{2r+1}$ we have that $f(X^r)$ and $u.f(X_u^r)$ are consistent.
\end{itemize}
\end{definition}
 The conditions of consistency are here to ensure that if the local rule is applied on two ``close'' vertices of the same graph, the two resulting graphs will be intersecting and consistent.

A local rule is a mathematical object which can be characterized by four parameters:
\begin{itemize}
\item $|\pi|$ the degree of the graphs it is applied on,
\item $\Sigma$ the set of vertex labels,
\item $r$ the  radius of the disks it is applied on,
\item $b$ the maximal size of its images.
\end{itemize}
The set of local rules of parameters $(|\pi|,\Sigma,r,b)$ is denoted $\mathcal{F}_{\pi,\Sigma,r,b}$.

\begin{definition}[Union]
The union $G\cup H$ of two consistent graphs $G$ and $H$ is defined as follows:
\begin{itemize}
\item $V(G\cup H)=V(G)\cup V(H)$,
\item $E(G\cup H)=E(G)\cup E(H)$.
\end{itemize}

\end{definition}

The definition of localizable function describes how these local rules can be used to induce a global function that acts on graphs of arbitrary size.

\begin{definition}[Localizable function]\label{def:localizable}
A function $F$ from ${\cal X}_{\Sigma, \ports}$ to ${\cal X}_{\Sigma, \ports}$ is said to be {\em localizable} if and only if there exists a radius $r$ and a local rule $f$ from ${\cal X}^r_{\Sigma, \ports}$ to ${\cal G}_{\Sigma, \ports}$ such that for all $X$, $F(X)$ is given by the equivalence class, with $\varepsilon$ taken as the pointer vertex, of the graph
$$\sim \bigcup_{u\in X} u.f(X_u^r).$$
where $\sim G$ constructs the generalized Cayley graph having the same structure as $G$, starting from the vertex with name $\varepsilon$.
\end{definition}
\medskip
We provide below an example of a local rule which splits a vertex into 4 vertices.

\noindent {\em Inflating grid.} This example is taken from \cite{ArrighiCGD}. 
Each vertex gives birth to four distinct vertices, such that the structure of the initial graph is preserved, but inflated. The graph has maximal degree $4$, as the ports take their names in $\pi=\{n,s,e,w\}$. Vertices and edges are unlabelled. The local rule is of radius zero.
The standard case of the local rule is described in Figure \ref{fig:inflatinggeneral}. It generates a subgraph of $12$ vertices, with names serving as identification information, so that the generated graphs glue back together. 
\begin{figure}[htpb]
\centering
\includegraphics[scale=0.4, clip=true, trim=0cm 19cm 0cm 0cm]{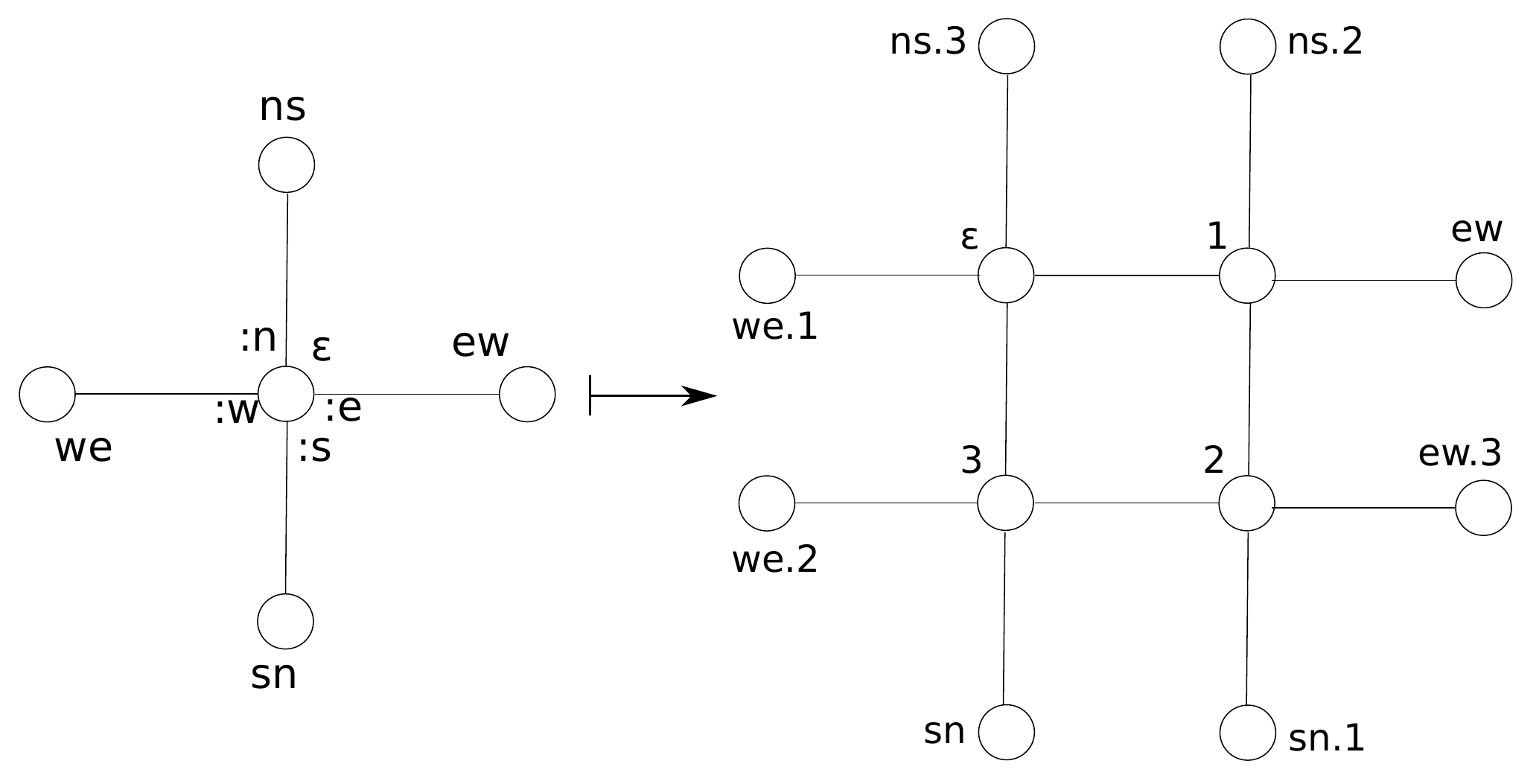}
\caption{Standard case of the local rule for the inflating grid (four neighbours). \label{fig:inflatinggeneral}}
\end{figure}
For a complete definition we would also have to include the boundary cases, when vertex $\varepsilon$ is surrounded by less than $4$ neighbours (see Figure \ref{fig:inflatingborder}, for instance). 
\begin{figure}[htpb]
\centering
\includegraphics[scale=0.45, clip=true, trim=0cm 21cm 0cm 0cm]{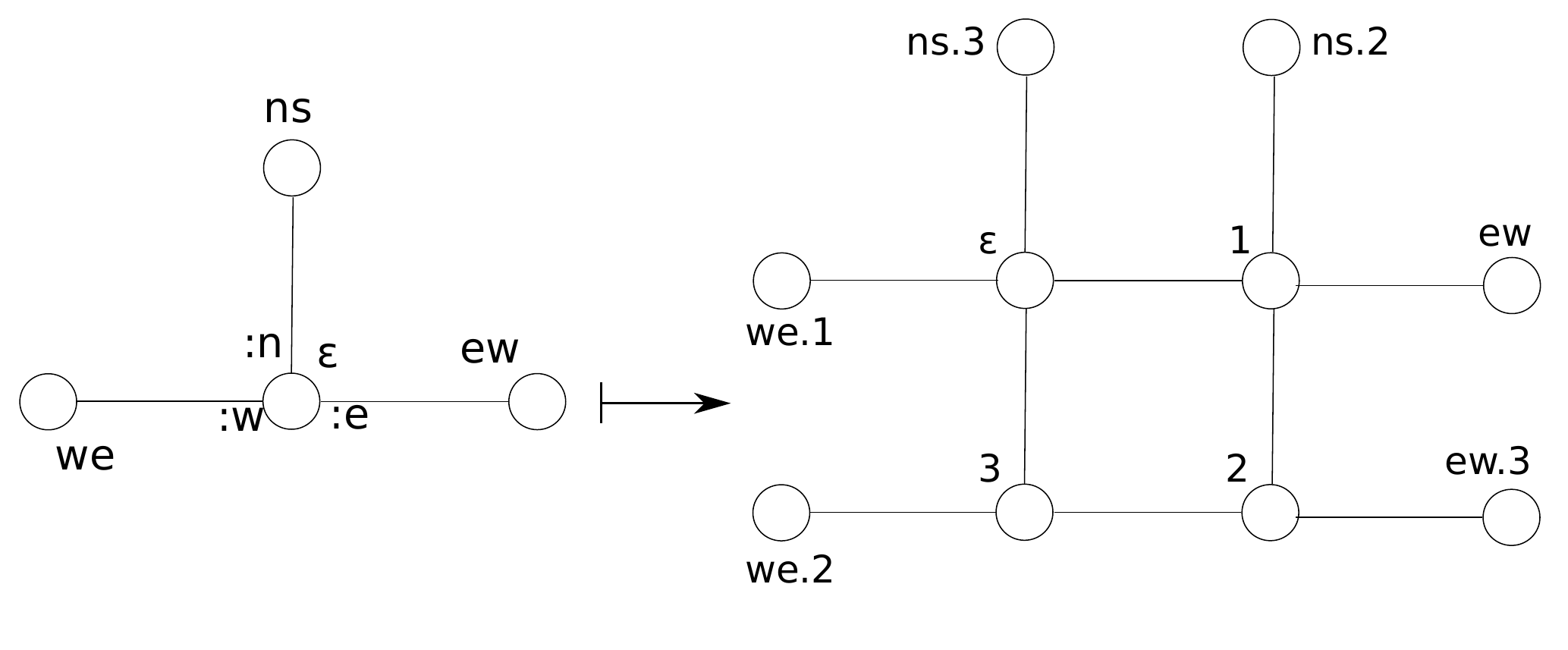}
\caption{Boundary case of the local rule for the inflating grid (three neighbours). \label{fig:inflatingborder}}
\end{figure}
Moreover, we would also have to include the cases when ports have not been set up in the natural manner (e.g. if there is an $ee$ connection), or when the vertex has a loop, etc. None of these is a problem.

The global dynamics is obtained by gluing the different $u.f(X^r_u)$, identifying the vertices having the same set of names, as illustrated in Figure \ref{fig:inflatingglue}.
\begin{figure}[htpb]
\centering
\includegraphics[scale=0.45, clip=true, trim=0cm 4cm 0cm 0cm]{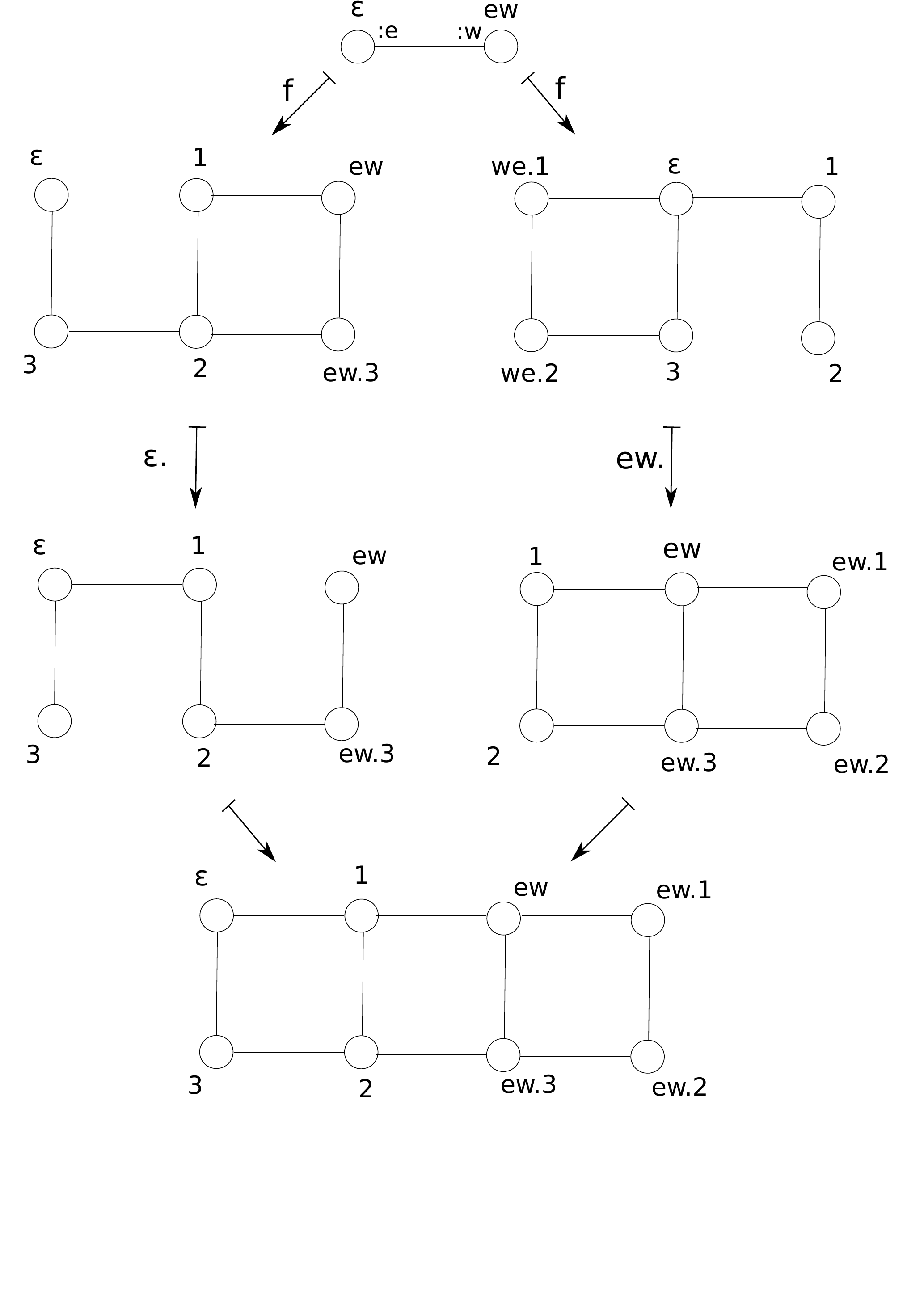}
\caption{The dynamics applied upon a pair of vertices.\label{fig:inflatingglue} {\small First the local rule is applied at each vertex. Then its images are shifted by the names of the vertices. Finally, they are glued back together.}}
\end{figure}

\subsection{Uniform Continuity and Shift invariance}
Intrinsic simulation for cellular automata is often based on the notion of grouping, i.e to each cell of the simulated automaton corresponds a group of cells in the simulating automata (see \cite{theyssierIS} for instance). Due to the variety of the graphs we are working on (they are not necessarly grids), we need a generalized notion of cell grouping (or here, vertex grouping). For this we need two notions: Uniform Continuity and shift invariance.

\medskip

\noindent {\bf Uniform Continuity}
We introduce the following metric over $\mathcal{X}_{\pi,\Sigma}$:
\begin{definition}[Gromov-Hausdorff-Cantor metrics]\label{def:metric}
~\\Consider the function
\begin{align*}
d:{\cal X}_{\Sigma, \ports}\times{\cal X}_{\Sigma, \ports} &\longrightarrow {\mathbb R}^+\\
(X,X')&\mapsto d(X,Y)=0\quad \textrm{if }X=Y\\
(X,X')&\mapsto d(X,Y)=1/2^r\quad \textrm{otherwise}
\end{align*}
where $r$ is the minimal radius such that $X^r \neq Y^r$.\\
The function $d(.,.)$ is such that for $\epsilon>0$ we have (with $r=\lfloor - \log_2(\epsilon)\rfloor$):
$$d(X,Y)<\epsilon \Leftrightarrow X^r = Y^r.$$
\end{definition}

\noindent Once endowed with such a metric, the following result arises:
\begin{lemma}[Compactness] \label{lem:compactness}
$({\cal X}_{\Sigma, \ports},d)$ is a compact metric space, i.e. every sequence admits a converging subsequence.
\end{lemma}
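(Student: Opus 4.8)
The plan is to treat $({\cal X}_{\pi,\Sigma}, d)$ as an inverse limit of finite sets and to run a Cantor-style diagonal extraction. First I would record the two structural facts that drive the argument. On the one hand, $d$ is in fact an ultrametric: since $X^r = Y^r$ forces $X^{r'} = Y^{r'}$ for every $r'\le r$, from $X^r = Z^r$ and $Z^r = Y^r$ one gets $X^r = Y^r$, whence $d(X,Y)\le\max(d(X,Z),d(Z,Y))$. This is not strictly needed but clarifies the geometry. On the other hand --- and this is the engine of the whole proof --- for every fixed radius $r$ the set $\mathcal{X}^r_{\pi,\Sigma}$ of disks is \emph{finite}. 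Indeed, the graphs have degree bounded by $|\pi| = d$, so a disk of radius $r$ contains at most $1 + d + \cdots + d^r$ vertices; since both $\pi$ and $\Sigma$ are finite, there are only finitely many ways to wire and label such a bounded set of vertices, hence finitely many disks.

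Given these facts, I would take an arbitrary sequence $(X_n)_{n\in\N}$ in $\mathcal{X}_{\pi,\Sigma}$ and build a convergent subsequence by induction on the radius. At stage $0$, since $\mathcal{X}^0_{\pi,\Sigma}$ is finite, infinitely many of the $X_n$ share a common disk of radius $0$; I keep them, obtaining a subsequence $(X_n^{(0)})$ and a disk $D_0$ with $(X_n^{(0)})^0 = D_0$ for all $n$. At stage $k+1$, I would apply the same pigeonhole argument to $\mathcal{X}^{k+1}_{\pi,\Sigma}$ \emph{inside} the subsequence $(X_n^{(k)})$, extracting a further subsequence $(X_n^{(k+1)})$ on which the disk of radius $k+1$ is a constant $D_{k+1}$. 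Because each subsequence refines the previous one, the family $(D_k)_k$ is coherent: the ball of radius $j$ inside $D_k$ equals $D_j$ whenever $j\le k$. I then form the diagonal sequence $Y_k := X_k^{(k)}$; by construction $Y_k$ belongs to the $j$-th subsequence for every $j\le k$, so $Y_k^j = D_j$ for all $j\le k$.

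It remains to produce a limit point and to check convergence. The coherent family $(D_k)_k$ determines a pointed object $X$ --- concretely the union (inverse limit) of the disks in the canonical naming --- and I must verify that $X$ is a genuine generalized Cayley graph, i.e. that it satisfies the four axioms of the definition. This is where the only real care is needed, but it is painless because each axiom is a finitary statement: prefix-closure, compatibility of $\equiv_L$ with concatenation, existence of inverse paths, and uniqueness of the port used from a given vertex each involve only finitely many ports and a bounded-length path, hence are already witnessed inside some $D_k$ and therefore pass to the limit. Thus $X\in\mathcal{X}_{\pi,\Sigma}$ and $X^k = D_k$ for every $k$. Finally, $Y_k^k = D_k = X^k$ means that the smallest radius at which $Y_k$ and $X$ differ exceeds $k$, so $d(Y_k,X)\le 1/2^{k+1}\to 0$, proving $Y_k\to X$.

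The main obstacle, such as it is, lies not in the diagonal extraction (which is the standard compactness argument for profinite spaces) but in the bookkeeping of the limit object: one must ensure that the inverse limit of the $D_k$ is again a well-formed generalized Cayley graph and that the phrase ``the ball of radius $j$ of $D_k$ equals $D_j$'' is precisely the coherence delivered by nested extraction. Everything else reduces to the finiteness of $\mathcal{X}^r_{\pi,\Sigma}$, which is the crux and follows immediately from bounded degree together with the finiteness of $\pi$ and $\Sigma$.
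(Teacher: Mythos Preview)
The paper does not actually supply a proof of this lemma: it is stated and immediately followed by the definition of uniform continuity, with the result presumably imported from \cite{ArrighiCayley}. So there is nothing to compare your argument against on the paper's side.

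That said, your proof is correct and is the standard one for this kind of statement. The two ingredients you isolate --- finiteness of $\mathcal{X}^r_{\pi,\Sigma}$ for each $r$ (bounded degree plus finite $\pi$ and $\Sigma$) and a K\"onig/Cantor diagonal extraction through the nested system of disks --- are exactly what is needed, and your treatment of the limit object is appropriate: since a generalized Cayley graph is given by a language $L\subseteq\Pi^*$ together with an equivalence $\equiv_L$, the limit is simply $L=\bigcup_k L(D_k)$ with the union of the equivalences, and each of the four axioms involves only a bounded-length word, hence is already verified inside some $D_k$. One cosmetic remark: you do not need the ultrametric observation at all (as you note yourself), and the coherence $(D_k)^j=D_j$ is automatic from the nested extraction, so the ``bookkeeping'' you flag as the main obstacle is indeed routine.
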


Morover, we now have a notion of uniform continuity for functions over generalized cayley graphs:

\begin{definition}[Uniformly continuous function]\label{def:continuity}
$F:\mathcal{X}_{\pi,\Sigma}\rightarrow\mathcal{X}_{\pi',\Sigma'}$ is continuous if and only if:
$$ \forall r,\exists k,\forall X,Y\in \mathcal{X}_{\pi,\Sigma},X^k=Y^k \Rightarrow F(X)^r=F(Y)^r$$
\end{definition}
Notice that, using the compactness of ${\cal X}_{\Sigma, \ports}$, we have that continuous functions over graphs and uniformly continuous functions.

\noindent {\bf Shift invariance}
We will not formally define shift invariance here; it has been properly introduced in \cite{ArrighiCayley}. Though easy to intuitively understand, shift invariance requires some additional properties in the definition of functions over graphs and thus considerably complexifies the notations. We will only give an informal idea of what a shift invariant function over generalized Cayley graphs is.

First of all, generalized Cayley graphs are pointed graphs, as a particular vertex $\varepsilon$ exists. Shifting a graph corresponds to moving the pointer on another vertex of the graph. That being told, a shift invariant function can be seen as a function which transforms a graph in another graph whose structure does not depend on the position of the pointer. More precisely, any shift in the initial graph will only induce a shift of the image graph.

\subsection{Universal machine}
Turing's construction of a universal machine is not the only way to define universality. While his formalism captures the idea we have in mind of what would be nowadays called an interpreter, another less formal definition suggested by von Neumann, namely the universal construction machine, is closer to the notion of a compiler. Von Neumann's idea, directly inspired by Turing's universal machine, is that there must exist a machine (not necessarily a Turing machine) which, when provided a suitable description of an instance of a computational model, constructs a copy of it. This definition is particularly useful when considering the problem of self-reproduction~\cite{arbib}. The most classical example to illustrate this definition is the uniform generation of Boolean networks where a Turing machine receives as an input the standard encoding of the circuit and its size and explicitely generates the corresponding Boolean network~\cite{badiga88}. Though detached from any mathematical formalism, this notion particularly fits to our model, in as much as we are allowed to modify the topology and have enough freedom to design such a machine inside the model iteself. In our case we want to build this universal construction machine inside our model and such that given the encoding of an initial graph $X$ and the encoding of a local rule $f$, the machine constructs a simulation of $f$ applied on $X$. To do so, we first define the notion of intrinsic simulation. Just before, we introduced the notion of uniformly continuous and shift invariant functions over generalized Cayley graphs which we will use here to define intrinsic simulation.

\begin{definition}[Intrinsic simulation]
Let $X_1\in \mathcal{X}_{\pi_1,\Sigma_1}$ and $X_2\in  \mathcal{X}_{\pi_2,\Sigma_2}$ be two graphs and let $f_1\in\mathcal{F}_{\pi_1,\Sigma_1,r_1,b_1}$ and $f_2\in\mathcal{F}_{\pi_2,\Sigma_2,r_2,b_2}$ be two local rules inducing respectively $F_1$ and $F_2$. We say that $(f_1,X_1)$ intrinsically simulates $(f_2,X_2)$ with constant slow down $\delta$ if and only if there exists a computable, uniformly continuous and shift invariant injection $E:\mathcal{X}_{\pi_2,\Sigma_2}\rightarrow \mathcal{X}_{\pi_1,\Sigma_1}$ and an integer $\delta \in \mathbb{N}^*$ such that:
$$ \forall n\in \mathbb{N}, F_1^{\delta n}(X_1)=E(F_2^n(X_2)) $$
\end{definition}
Now we have a notion of simulation, we can define our universal machines and rules:

\begin{definition}[Universal machine and universal rule]
A universal construction machine is a 5-tuple\\$(X_M,f_M,enc_X,enc_f,f_{\operatorname{univ}})$ such that:
\begin{itemize}
\item $X_M\in\mathcal{X}_{\pi',\Sigma'}$ is a graph with at least $2$ ports available,
\item $f_M$ is a local rule on disks of $\mathcal{X}^{r_M}_{\pi',\Sigma'}$,
\item $enc_X: \mathcal{X}_{\pi,\Sigma}\rightarrow \mathcal{X}_{\pi',\Sigma'}$ an injective computable function encoding the initial graph, 
\item $enc_f: \mathcal{F}_{\pi,\Sigma,r,b}\rightarrow \mathcal{X}_{\pi',\Sigma'}$ an injective computable function encoding the local rule to simulate,
\item $f_{\operatorname{univ}}$ is a local rule on disks of $\mathcal{X}^r_{\pi',\Sigma'}$,
\end{itemize}
and such that for all graph $X$ and local rule $f$, successive applications of the localizable function induced by $f_M$ on the graph composed of $X_M$ connected to $enc_X(X)$ and $enc_f(f)$ will constructs a graph  $X_f$ such that $(f_{\operatorname{univ}},X_f)$ intrinsically simulates $(f,X)$. The local rule $f_{\operatorname{univ}}$ is said to be universal.
\end{definition}
Less formally, the machine itself is a graph $X_M$ to which we will connect the encoding of the initial graph $X$ and the encoding of $f$. After a certain number of iterations of the local rule $f_M$ (possibly infinitely many iterations, if $X$ is infinite), the initial state $X_f$ of the simulation is ready. The successive iteration of the simulation can be computed using a universal function $f_{\operatorname{univ}}$ (that does not depend on $X$ nor $f$).

The next two sections are devoted to the construction of a universal construction machine and its associated universal local rule for fixed parameters $\pi$,$\Sigma$,$r$ and $b$, which provide a proof to the result:

\begin{theorem}[Intrinsic universality]
For every set of parameters $\pi$,$\Sigma$, $r$ and $b$, there exists a universal machine $(X_M,f_M,enc_X,enc_f,f_{\operatorname{univ}})$ such that for all $X\in\mathcal{X}_{\pi,\Sigma}$ and for all $f\in\mathcal{F}_{\pi,\Sigma,r,b}$, successive applications of the localizable function induced by $f_M$ on the graph composed of $X_M$ connected to $enc_X(X)$ and $enc_f(f)$ will construct a graph  $X_f$ and $(f_{\operatorname{univ}},X_f)$ intrinsically simulates $(f,X)$.
\end{theorem}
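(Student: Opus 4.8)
The plan is to construct each of the five components of the universal machine explicitly, exploiting the fact that for fixed parameters the object to be simulated is finitary. First I would observe that since $\ports$, $\Sigma$ and $r$ are all fixed and finite, the set of disks $\mathcal{X}^r_{\pi,\Sigma}$ is finite and, because every image has size at most $b$, there are only finitely many admissible outputs. Hence any $f\in\mathcal{F}_{\pi,\Sigma,r,b}$ is a finite lookup table, and I can take $enc_f$ to be the injective computable function that lays this table out as a fixed finite subgraph in $\mathcal{X}_{\pi',\Sigma'}$. This is the conceptual crux that makes the problem tractable: the ``program'' is finite even though the ``data'' (the graph $X$) may be infinite.

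Next I would design $enc_X$ as a local substitution that replaces each vertex $v$ of $X$ by a fixed-size gadget (a \emph{meta-cell}) storing the label $\sigma_X(v)$ and the port structure of $v$, with neighbouring gadgets wired together along dedicated ports of $\ports'$. Because this substitution is purely local and does not refer to the pointer, $enc_X$ is automatically shift invariant, and because a radius-$k$ disk of $X$ determines and is determined by a disk of radius at most $c\,k$ in $enc_X(X)$ for a constant $c$, it is uniformly continuous and injective. The construction rule $f_M$ then plays the role of the compiler: starting from $X_M$ wired to $enc_X(X)$ and $enc_f(f)$, it would broadcast copies of the encoded table $enc_f(f)$ to every meta-cell and assemble the initial simulation state $X_f$, in which each meta-cell carries both the simulated data and a local copy of the rule it must apply. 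The encoding map $E:\mathcal{X}_{\pi,\Sigma}\to\mathcal{X}_{\pi',\Sigma'}$ witnessing the simulation is essentially $enc_X$ augmented with these rule copies.

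The heart of the argument is the universal rule $f_{\operatorname{univ}}$, which must realise one application of the simulated dynamics in a bounded number $\delta$ of its own steps. I would implement a fixed \emph{gather--compute--apply} cycle: over a constant number of steps each meta-cell collects the encoding of its radius-$r$ neighbourhood in the simulated graph (possible because $r$ is fixed and the gadgets have bounded degree), matches it against its local copy of the table to select the appropriate image of $f$, and then rewrites its neighbourhood of gadgets so as to realise that image, creating and deleting meta-cells to mimic the creation and deletion of simulated vertices. The slowdown $\delta$ is exactly the length of this cycle. To conclude I would verify by induction on $n$ that after every block of $\delta$ steps the simulating configuration satisfies $F_1^{\delta n}(X_f)=E(F_2^{\,n}(X))$, which amounts to checking that the gadget-level rewriting faithfully reproduces the union $\bigcup_{u\in X} u.f(X_u^r)$ followed by the $\sim$ construction.

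The main obstacle I expect is the apply phase, namely simulating arbitrary topological change with a \emph{fixed} rule on a \emph{fixed} port set $\ports'$ while respecting the consistency conditions in the definition of a local rule. Neighbouring meta-cells independently compute overlapping images of $f$, and the universal rule must guarantee that the freshly created meta-cells and their wirings glue together exactly as the union operation prescribes; this is precisely where the non-trivial and full consistency conditions on $f$ must be transported through the encoding so that the gadget outputs agree on their shared vertices. Managing the bounded-degree bookkeeping that lets each meta-cell name, locate and connect to the successors produced by its neighbours — that is, simulating the suffix names drawn from $S$ and the identification of vertices sharing a name — is the delicate engineering to which the remaining two sections are devoted.
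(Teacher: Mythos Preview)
Your approach is sound but differs from the paper's in an essential way. The paper exploits the freedom to choose $\Sigma'$ more aggressively: it sets $\Sigma' = \Sigma \times \Gamma_{\pi,\Sigma,r,b}$, so the \emph{entire} lookup table for $f$ fits into a single label symbol. The simulation state $X_f$ is then just $X$ itself with each label augmented by $\langle f\rangle$ (written $X[\langle f\rangle]$): no meta-cells, no gadgets, the port set stays $\pi$, and the universal rule is the one-liner $f_{\operatorname{univ}}(X[\langle f\rangle]_u^r)=f(X_u^r)[\langle f\rangle]$ with slowdown $\delta=1$. The apply-phase consistency obstacle you flag simply evaporates, since $f_{\operatorname{univ}}$ inherits the consistency conditions directly from $f$. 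In exchange, the paper's $enc_X$ is \emph{not} a local substitution but a DFS serialisation of $X$ into a linear string-graph, and the machine rule $f_M$ must sequentially rebuild $X$ from that string while stamping $\langle f\rangle$ into every reconstructed vertex. So you put the engineering into $f_{\operatorname{univ}}$ (a gather--compute--apply cycle over gadgets, in the classical CA-grouping style) and keep the encodings local and shift-invariant; the paper puts the engineering into $f_M$ (sequential graph reconstruction from a string) and keeps $f_{\operatorname{univ}}$ trivial. Your route yields a more structural simulation closer to the CA tradition; the paper's route buys a dramatically simpler universal rule and no-slowdown simulation at the cost of a non-local input encoding and a heavier construction phase.
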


\section{Encodings}
In order to build a universal construction machine for our model, we must be able to encode both the initial graph and the local rule. The next two subsections describe instances of these encodings.

\subsection{Initial graph}
We need an encoding of graphs of ports $\pi$ and labeled over a finite set of labels $\Sigma$. To do so we use the \emph{Depth First Search} (DFS) algorithm for graph exploration to encode our graph into a string over a finite alphabet.

The usual way to encode a graph into a string (or an integer) is to proceed to a DFS on the graph while remembering the edges leading to any previously visited vertex. The alphabet we use to encode the initial graph of a dynamics of degree $|\pi|$ and of labels $\Sigma$ is the following:
$$ \mathcal{A}_{\pi}=\pi^2 \cup \{\$,;,|\} \cup\Sigma$$

With $\Sigma$ the finite set of labels of the vertices and $\{\$,;,|\}$ some arbitrary symbols used as delimiters (we need 3 of them). Figure \ref{fig:ex4} gives an example of the encoding of a graph of order $4$.

\begin{figure}[h]
\begin{center}
\includegraphics[scale=1.5,bb=11 5 89 83]{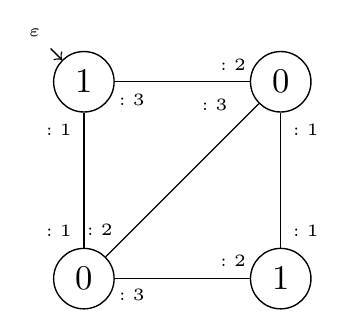}
\end{center}
\caption{Generalized Cayley graph of order $4$. The labels are in the set $\Sigma=\{0,1\}$, numbers on the edges are vertices ports, the incoming arrow on the top-left vertex is the ``starting'' pointer. Its encoding is the string: $\$1;(1,1)\$0;(2,3)\$0(2,3)||;(1,1)\$1(2,3)||;$\label{fig:ex4}}
\end{figure}

The string encoding the graph is a sequence of words, one for each vertex, describing the backward edges ({\emph ie} the edges leading to one of the previously visited vertices) and the forward path leading to the next vertex visited by the DFS.
The structure of a word is described as follow:
$$ \$\sigma\;(i_1,j_1)\overbrace{|\ldots}^{n_1}(i_2,j_2)\overbrace{|\ldots}^{n_2}... ;(s_1,t_1)(s_2,t_2)\ldots (s_n,t_n)\ \$\text{next word}... $$
where:
\begin{itemize}
\item[$\bullet$] $\$$ plays the role of word separator.
\item[$\bullet$] $\sigma\in \Sigma$ is the label of the vertex.
\item[$\bullet$] $(i,j)\overbrace{|\ldots}^{k}$ describes the existence of an edge from port $i$ of the current vertex to port $j$  of the $k^{th}$ vertex when backtracking in the DFS.
\item[$\bullet$] $;$ is a separator between the backward edges and the forward path.
\item[$\bullet$] $(s_1,t_1)(s_2,t_2)\ldots (s_n,t_n) $ describes the path from the current vertex to the next vertex in the DFS.
\end{itemize}
The backwards edge $(i,j)\overbrace{|\ldots}^{k}$ is described using a unary description of the number of times a backtracking has to be made in the DFS. This unary encoding is here to simplify the functioning of the universal machine described in the next section and does not change the time complexity of the construction of the graph.
Notice that as all our graphs are generalized Cayley graphs, they are pointed and thus this encoding is unique (the root of the DFS being the empty path $\varepsilon$). This encoding is not the only way to encode a graph in a string and maybe not the most efficient way but it conveniently fits to our needs while being easy to describe. We could have equivalently defined our encoding using a Breadth First Search algorithm instead of a DFS without changing the complexity of the encoding.

This encoding is both injective and computable (it simply consists in a DFS).

In the following, the encoding of a generalized Cayley graph $X$ in a string is written $\langle X\rangle$.

\subsection{Local Rule}
Here we are interested in the local rules from the set $\mathcal{F}_{\pi,\Sigma,r,b}$. In this section we show that we can give a finite and bounded description of every local rule of this set.
Consider the following properties:
\begin{itemize}
\item $\mathcal{X}^r_{\pi,\Sigma}$ is finite.
\item $\{G\in \mathcal{G}_{\pi,\Sigma}, |V(G)|\leq b\} $ is finite.
\end{itemize}
We have that any $\mathcal{F}_{\pi,\Sigma,r,b}$ is finite. To get a rough bound, let us write $N=|\mathcal{D}^r_{\pi,\Sigma}|$ and $I=|\{G\in \mathcal{G}_{\pi,\Sigma}, |V(G)|\leq b\}|$ then the size of $\mathcal{F}_{\pi,\Sigma,r,b}$ is bounded by $I^N$. Notice that this is just an upper bound and not an exact value. Indeed some disks of $\mathcal{X}^r_{\pi,\Sigma}$ give naming constraints on their images through $f$, limitating the possible images for a given disk $X^r$ to a potentially strict subset of $\{G\in \mathcal{G}_{\pi,\Sigma}, |V(G)|\leq b\}$. \\
Though big, this number is finite and once all parameters $(\pi,\Sigma,r,b)$ are fixed, we can associate to each possible local rule in $\mathcal{F}_{\pi,\Sigma,r,b}$ a description in a finite set of descriptions. We can for instance introduce an ordering on the disks of $\mathcal{D}^r_{\pi,\Sigma}$ and on the graphs of $\{G\in \mathcal{G}_{\pi,\Sigma}, |V(G)|\leq b\}$ and encode the description of $f\in\mathcal{F}_{\pi,\Sigma,r,b}$ in a table of size $N$, such that the $i^{th}$ component of the table contains the index of its image through $f$.

 In the next section we write $\Gamma_{\pi,\Sigma,r,b}$ the finite set of descriptions associated to $\mathcal{F}_{\pi,\Sigma,r,b}$ and $\langle f\rangle$ the description of $f\in\mathcal{F}_{\pi,\Sigma,r,b}$.

\section{Universal dynamics}

\subsection{Universal machine}
The universal machine we design is implemented in the model itself. It consists in a single vertex to which the two encodings are connected.
 The string $\langle X\rangle$ is placed in a linear graph (a sequence of vertices) each vertex containing a symbol of the string. Then this encoding is connected to the machine. Another vertex containing the description of the local rule to simulate is also connected to the machine.

The universal machine can now read the string $\langle X\rangle$ describing $X$ in order to build the graph. While building it, each vertex receives its label $\sigma\in \Sigma$ and the description $\langle f\rangle\in\Gamma_{\pi,\Sigma,r,b}$ of the local rule $f$.
The universal machine itself is a vertex of degree $7$:
\begin{itemize}
\item Two edges for the two inputs ($\langle X\rangle$ and $\langle f\rangle$),
\item Two edges for manipulating the graph being constructed (one pointing at the last added vertex and another traveling along the DFS tree to create backward edges),
\item Two edges for manipulating a stack used to store the sequence of paths linking to consecutive vertices in the DFS (one pointing at the start of the stack, the other reading it),
\item One edge linked to a buffer vertex.
\end{itemize}

The execution of the universal machine is described by a simple local rule acting as the identity everywhere except in the neighborhood of the machine itself.
Notice that the graph generated has labels in $\Sigma \times \Gamma_{\pi,\Sigma,r,b}$. The $\Sigma$ component of the labels contains the label of the vertex and the $\Gamma_{\pi,\Sigma,r,b}$ component contains the description $\langle f\rangle$ of the local rule that has to be simulated. 
More simply, the machine re-constructs the graph $X$ while adding the encoding of $f$ in every vertex label. We denote by $X[\langle f\rangle]$ such a graph.

\subsection{Universal local rule}
We now have to define the rule $f_{\operatorname{univ}}$ that will simulate any rule $f\in\mathcal{F}_{\pi,\Sigma,r,b}$. This rule will be partial and defined only on graphs which are constructed by the universal machine. We can formally define $f_{\operatorname{univ}}$ as follows: for all graph $X\in\mathcal{X}_{\pi,\Sigma}$ and all rule $f\in\mathcal{F}_{\pi,\Sigma,r,b}$:
$$ f_{\operatorname{univ}}(X[\langle f\rangle]_u^r)=f(X_u^r)[\langle f\rangle]  $$
Less formally, $f_{\operatorname{univ}}$ simply looks at the description of $f$ and returns $f(X_u^r)$ while preserving the description $\langle f \rangle$ in every vertex.
The next two properties of $f_{\operatorname{univ}}$ confirm is legitimacy as universal rule.

\begin{lemma}
$f_{\operatorname{univ}}$ is a local rule.
\end{lemma}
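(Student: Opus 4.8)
The plan is to verify directly that $f_{\operatorname{univ}}$ satisfies each of the four defining conditions of a local rule from its definition $f_{\operatorname{univ}}(X[\langle f\rangle]_u^r)=f(X_u^r)[\langle f\rangle]$, transferring each property from the known local rule $f$. The key observation throughout is that $f_{\operatorname{univ}}$ produces the \emph{same underlying graph structure} as $f$ (same vertices, same edges, same ports), differing only in that the vertex labels now live in $\Sigma\times\Gamma_{\pi,\Sigma,r,b}$, with the second component uniformly equal to $\langle f\rangle$ across all vertices. Since $f$ is already a local rule by hypothesis, most conditions reduce to noting that appending a constant label $\langle f\rangle$ to every vertex affects neither the vertex-name sets nor the consistency relations.

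First I would check the vertex-naming and boundedness conditions. The vertices of $f_{\operatorname{univ}}(X[\langle f\rangle]_u^r)$ are by construction exactly the vertices of $f(X_u^r)$, so they remain disjoint subsets of $V(X).S$ with $\varepsilon$ present, and $|V(f_{\operatorname{univ}}(X^{r+1}))|=|V(f(X^{r+1}))|\le b$ with the same bound $b$. The only subtlety to address is that the disk $X[\langle f\rangle]_u^r$ carries the extra label component, so I must argue that $f_{\operatorname{univ}}$ reads the $\Sigma$-component to recover $X_u^r$, applies $f$, and re-attaches $\langle f\rangle$; because the graph produced by the universal machine has the \emph{same} description $\langle f\rangle$ in every vertex label, the rule can read this component locally and consistently, so $f_{\operatorname{univ}}$ is well defined as a (partial) function on exactly those disks arising from graphs of the form $X[\langle f\rangle]$.

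Next I would establish the two consistency conditions. Here the strategy is to reduce consistency of $f_{\operatorname{univ}}(X[\langle f\rangle]^r)$ with $u.f_{\operatorname{univ}}(X[\langle f\rangle]_u^r)$ to the already-known consistency of $f(X^r)$ with $u.f(X_u^r)$. The edge/port agreement part of consistency is identical because the graph structures coincide. For the label-agreement part, on the overlap $V(f(X^r))\cap V(u.f(X_u^r))$ the $\Sigma$-components agree precisely because $f$ satisfies consistency, and the $\Gamma$-components agree trivially since both are the constant $\langle f\rangle$. This yields non-trivial consistency for $u\in X^0$ (condition three) and full consistency for $u\in X^{2r+1}$ on disks $X^{3r+2}$ (condition four) by the same radius bookkeeping as for $f$.

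The main obstacle is not any single condition but rather making precise the claim that $f_{\operatorname{univ}}$ is genuinely a well-defined function whose domain is restricted to disks of constructed graphs $X[\langle f\rangle]$, and that the radius $r$ suffices for the rule to \emph{locally} recover both $X_u^r$ and the constant label $\langle f\rangle$. I would handle this by emphasizing that $\langle f\rangle$ is stored identically in every vertex, so any disk of radius $r$ exposes $\langle f\rangle$ together with the full $\Sigma$-labelled disk $X_u^r$; hence $f_{\operatorname{univ}}$ factors as ``strip the $\Gamma$-component, apply $f$, re-append $\langle f\rangle$,'' and all four local-rule properties follow from those of $f$.
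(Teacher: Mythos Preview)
Your proposal is correct and follows essentially the same approach as the paper: the paper's proof is a one-line statement that $f_{\operatorname{univ}}$ ``inherits from all the properties of all the local rules of the set $\mathcal{F}_{\pi,\Sigma,r,b}$,'' and your argument is precisely a careful unpacking of what that inheritance means for each of the four defining conditions. You supply considerably more detail than the paper does (in particular, the explicit check that the $\Gamma$-component of the labels is constant and hence cannot break consistency, and the remark on well-definedness of the partial function), but the underlying idea is identical.
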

\begin{proof}
$f_{\operatorname{univ}}$ inherits from all the properties of all the local rules of the set $\mathcal{F}_{\pi,\Sigma,r,b}$ and thus is a local rule.
\end{proof}
Notice that this result is true only because $f_{\operatorname{univ}}$ is partial. If $f_{\operatorname{univ}}$ is not required to be partial, and is defined over graphs where descriptions of different local rule are present, the subgraphs produced by $f_{\operatorname{univ}}$ on two different disks might be non consistent.

\begin{lemma}
$(f_{\operatorname{univ}},X[\langle f\rangle])$ intrinsically simulates $(f,X)$.
\end{lemma}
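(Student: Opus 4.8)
The plan is to take the slow down $\delta = 1$ and to use as encoding the map
$E \colon \mathcal{X}_{\pi,\Sigma} \to \mathcal{X}_{\pi,\Sigma\times\Gamma_{\pi,\Sigma,r,b}}$ defined by $E(Y) = Y[\langle f\rangle]$, that is, the map decorating every vertex of $Y$ with the fixed description $\langle f\rangle$ while leaving the topology and the $\Sigma$-component of the labels untouched. With this choice $E(X) = X[\langle f\rangle]$, so the initial condition of the simulation is already met at step $n=0$, and it remains only to check the four required properties of $E$ and the step-by-step correspondence of the two dynamics. Throughout, I write $F_{\operatorname{univ}}$ for the localizable function induced by $f_{\operatorname{univ}}$.

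Checking the properties of $E$ is routine. Injectivity is immediate, since $\langle f\rangle$ is a fixed constant that can be stripped off to recover $Y$. Computability is clear, as adorning each visited vertex with a fixed symbol is effective. For uniform continuity, observe that $E$ alters neither the port structure nor the set of paths, so $E(Y)^r = E(Y')^r$ whenever $Y^r = Y'^r$; hence $k = r$ satisfies Definition~\ref{def:continuity}. Shift invariance follows because the same label is added at every vertex, so relocating the pointer and then decorating yields the same graph as decorating and then relocating the pointer; in short, $E$ commutes with the shift operation $(\cdot)_u$.

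For the dynamics I would argue by induction on $n$ that $F_{\operatorname{univ}}^{\,n}(X[\langle f\rangle]) = F^{n}(X)[\langle f\rangle]$. The base case is the definition of $E$. For the inductive step, set $Y = F^n(X)$ and unfold the localizable function (Definition~\ref{def:localizable}) applied to $Y[\langle f\rangle]$:
$$F_{\operatorname{univ}}(Y[\langle f\rangle]) \;=\; {\sim}\bigcup_{u} u.f_{\operatorname{univ}}\big(Y[\langle f\rangle]_u^r\big).$$
The disk $Y[\langle f\rangle]_u^r$ is exactly $Y_u^r$ carrying the uniform label $\langle f\rangle$, so the defining equation of $f_{\operatorname{univ}}$ gives $f_{\operatorname{univ}}(Y[\langle f\rangle]_u^r) = f(Y_u^r)[\langle f\rangle]$. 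Since the added label is the same constant at every vertex, it factors out of both the shifted union and the reconstruction $\sim$, whence $F_{\operatorname{univ}}(Y[\langle f\rangle]) = \big({\sim}\bigcup_{u} u.f(Y_u^r)\big)[\langle f\rangle] = F(Y)[\langle f\rangle]$. Composing with the induction hypothesis closes the step, and therefore $F_{\operatorname{univ}}^{\,n}(X[\langle f\rangle]) = E(F^n(X))$ for all $n$.

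The main obstacle is precisely this last factoring: one must verify that adorning every vertex with a single fixed label commutes with the two global operations defining a localizable function, namely the gluing $\bigcup_{u} u.f(\cdot)$ of the shifted local images and the passage $\sim$ to the generalized Cayley graph. Both commutations hold because $\langle f\rangle$ is uniform and affects no port, edge, or path, so neither the consistency conditions used to form the union nor the equivalence classes of paths used by $\sim$ are disturbed; making this fully rigorous, however, requires tracking how the constant label is carried through the naming scheme $V(X).S$ and through the identification of vertices bearing the same set of names.
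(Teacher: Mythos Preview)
Your proposal is correct and follows essentially the same approach as the paper: the paper also takes $\delta=1$ and uses the encoding $P_f(X)=X[\langle f\rangle]$, asserting its computability, injectivity, and uniform continuity, and stating the identity $F_{\operatorname{univ}}^n(X[\langle f\rangle])=P_f(F^n(X))$. Your write-up is in fact more thorough than the paper's, which neither verifies shift invariance explicitly nor unfolds the inductive step through Definition~\ref{def:localizable}.
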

\begin{proof}
Consider the function $P_f:\mathcal{G}_{\pi,\Sigma}\rightarrow \mathcal{G}_{\pi,\Sigma\times\Gamma_{\pi,\Sigma,r,b}}$ such that for all graph $X$, $P_f(X)=X[\langle f\rangle]$. This function is computable (its simply consists in adding $\langle f\rangle$ in the label of each vertex), injective, uniformly continuous and we have for all $n\in \mathbb{N}$:
$$ F^n_{\operatorname{univ}}(X[\langle f\rangle])=P_f(F^n(X)) $$
with $F_{\operatorname{univ}}$ the localizable function induced by $f_{\operatorname{univ}}$ and $F$ the localizable function induced by $f$.
\end{proof}
Notice that the simulation occurs without any delay: one iteration in the simulated function corresponds to one iteration in the simulator.

\section{Conclusion and further works}
In this work, we defined two objects. A universal machine that, from a description of a local rule $f$ and a description of an initial generalized Cayley graph $X$, produces a graph $X[\langle f\rangle] $ that contains all the necessary information to simulate the successive iterations of $f$ on $X$. A universal local rule that intrinsically simulates $(f,X)$ when applied on $X[\langle f\rangle]$. We then presented an instance of each of these objects. \medskip

However, these objects can only simulate localizable dynamics induced by a local rule with given parameters $(\pi,\Sigma,r,b)$, and thus this result of intrinsic universality can still be improved. The goal is to design a universal local rule whose parameters do not depend on the parameters of the simulated dynamics. The main difficulty in designing such a local rule is to be able to copy the encoding of the simulated local rule, whatever it is, to the image graph.


\couic{
\section{Graph Constructing Dynamics}
The very first step in the designing of a universal instance of Causal Graphs Dynamics is to build a machine (or a local rule) able to generate any initial graph of the simulated dynamics. Such a machine will take some integer as input and then generate the corresponding graph.
We first give a cannonical way to encode a graph into a string and then describe the machine taking a string as input and generating the graph.

\subsection{Input}
The usual way to encode a graph into a string (or an integer) is to proceed to a Depth First Search (DFS) on the graph while remembering the edges leading to any previously visited vertex. This first definition describes the alphabet used to encode the initial graph of a dynamics of degree $\pi$.

\begin{definition}[Input Alphabet]
Let $\pi$ be an integer representing the degree of a dynamics. We define the alphabet $\mathcal{A}_{\pi}$ as:
$$ \mathcal{A}_{\pi}=\{1..\pi\}^2 \cup \{\$,;,|\} $$
\end{definition}

The string is a sequence of words, one for each vertex, describing the backward edges ({\emph ie} the edges leading to one of the previous vertex) and the foward path leading to the next vertex visited by the DFS.
The structure of a word is described as follow:
$$ \$(i_1,j_1)\overbrace{||\ldots|}^{n_1}(i_2,j_2)\overbrace{||\ldots|}^{n_2}... ;(s_1,t_1)(s_2,t_2)\ldots (s_n,t_n) $$
where:
\begin{itemize}
\item $\$$ plays the role of word separator.
\item $(i,j)\overbrace{||\ldots|}^{k}$ describes the existence of an edge from port $i$ of the current vertex to port $j$  of the $k^{th}$ vertex when going back in the DFS.
\item $;$ is a separator between the backward arcs and the forward path.
\item $(s_1,t_1)(s_2,t_2)\ldots (s_n,t_n) $ describes the path from the current vertex to the next vertex in the DFS.
\end{itemize}


For the graph $K_4$ (see figure 1) the encoding is the following string :
$$ \$;(1,1)\$;(2,1)\$(2,2)||;(3,1)\$(2,3)||(3,3)||| $$

For our dynamics to be able to read such an input, a linear graph is generated containing the characters of the string and is connected to the ``machine''.

\subsection{Graph constructing dynamics}

The device building the graph is composed of $4$ parts distinct parts all connected through the central node (the ``machine''). The first part is the input. The second part is a stack the machine uses to remember the current position in the DFS. The third part is a simple one vertex buffer used to remember a couple of port when building the backward edges. The fourth part is the graph being built.

The aspect of the graph constructing ``machine'' is a single vertex with $6$ ports, each one having a dedicated purpose:
\begin{itemize}
\item Port $1$ is used to read the input string ({\emph ie} the linear graph containing the input).
\item Port $2$ is used to store the information of ports for the backward edge currently built (if any).
\item Ports $3$ and $4$ are the stack top and stack reading ports.
\item Ports $5$ and $6$ are connected to the graph and respectively point toward the last added vertex and the current target of the backward edge (if any currently built).
\end{itemize}

Block reading:
At each time step, the dynamics modifies the neighborhood of the ``machine'' vertex and acts as the identity everywhere else. It is assumed that the stack contains the paths joining the vertices in the DFS order (the path joining the two last added vertices being on top of the stack).
After switching to a new block ({\emph ie} after reading $\$$), the first character to be red is either a couple of ports $(i,j)$ (annoucing a backward edge) or a $;$ separator (annoucing a path to a new vertex).\\
{\bf Building of the backward edges:}~\\
The first character $(i,j)$ is stored to the buffer and erased from the input. When reading the symbol $|$, the machine then enters a loop going down is the stack while moving an ``arm'' (edge on port $6$) in the graph along the path present in the stack.
Once all the $|$ have been red, the edges on port $5$ and $6$ are linked to both end of the backward edges, and the machine can read the buffer to add the edge on the correct ports.

\noindent
{\bf Creation the next vertex (if needed):}~\\
The machine enters a loop, adding the input symbol on top of the stack while moving its edge on port $5$ along the path. If the current edge is not present in the graph, a new vertex is created and the loop stops (the path must be over, if not, the input is invalid).

}

\bibliography{biblio}
\bibliographystyle{eptcs.bst}

\end{document}